\documentclass[pra,twocolumn,showpacs,10pt,floatfix]{revtex4-1}

\usepackage{amsmath}
\usepackage{amsfonts}
\usepackage{amssymb}
\usepackage{mathrsfs}
\usepackage{latexsym}
\usepackage{graphicx}

\newtheorem{theorem}{Theorem}

\newenvironment{proof}[1][Proof]{\noindent\textbf{#1.} }{\ \rule{0.5em}{0.5em}}

\begin{document}

\title{Monogamy of Quantum Discord by Multipartite Correlations} 

\author{H. C. \surname{Braga}}
\email{helenacbraga@gmail.com}
\author{C. C. \surname{Rulli}}
\author{Thiago R. de \surname{Oliveira}}
\author{M. S. \surname{Sarandy}}

\affiliation{Instituto de F\'{\i}sica, Universidade Federal Fluminense, Av. Gal. Milton Tavares de Souza s/n, Gragoat\'a, 
24210-346, Niter\'oi, RJ, Brazil.}

\date{\today }

\begin{abstract}
We introduce a monogamy inequality for quantum correlations, which implies that the sum of pairwise quantum correlations 
is upper limited by the amount of multipartite quantum correlations as measured by the global quantum discord. This monogamy bound 
holds either for pure or mixed quantum states provided that bipartite quantum discord does not increase under discard of subsystems. 
We illustrate the monogamy behavior for multipartite pure states with Schmidt decomposition as well as for W-GHZ mixed states. 
As a by-product, we apply the monogamy bound to investigate residual multipartite 
correlations.

\end{abstract}

\pacs{03.67.-a, 03.67.Mn, 03.65.Ud}

\maketitle

\section{Introduction}

Quantum discord (QD)~\cite{Ollivier:01} has recently been identified as a general resource in quantum 
information protocols~(see, e.g., Refs.~\cite{Madhok:12,Gu:12}). In quantum computation, it has been 
conjectured as the origin of speed up in the deterministic quantum computation with one qubit (DQC1) 
mixed-state model~\cite{DQC1}. Moreover, a fundamental role has been attributed to QD in tasks such as 
quantum locking~\cite{locking} and quantum state discrimination~\cite{Roa:11}. Besides quantum protocols, 
remarkable applications of QD have also been found in the characterization of quantum phase transitions~\cite{QPT} 
and in the description of quantum dynamics under decoherence~\cite{Decoh}. 

In order to use quantum correlations (as provided by QD) as a resource, we are faced with the problem of their 
distribution throughout a multipartite state. In this context, a monogamous behavior may reveal important information 
about the structure of quantum correlations. For instance, 
monogamy has been found to be the essential feature allowing for security in quantum key distribution~\cite{Pawlowski:04}. In general grounds, 
it has been investigated whether any given measure ${\cal Q}$ of bipartite quantum correlation can obey a monogamy bound in 
an arbitrary tripartite quantum system $ABC$. A measure ${\cal Q}$ has typically been defined as monogamous if it follows the 
inequality 
\begin{equation}
{\cal Q} (A,BC) \ge {\cal Q}(A,B) + {\cal Q}(A,C).
\label{mon1}
\end{equation}
Therefore, given a fixed value for the quantum correlation ${\cal Q} (A,BC)$ between subsystem $A$ and the remaining part $BC$ of the 
system, then $A$ cannot be freely correlated with the individual subsystems $B$ and $C$. Such a monogamy bound is obeyed by certain 
entanglement measures, which is indeed the origin of {\it tangle}~\cite{Coffman:00} as a measure of 
genuine multipartite entanglement. Whether or not monogamy can be obtained for a quantum correlation measure in arbitrary (pure or 
mixed) quantum states poses therefore as a further challenge. However, many attempts turned out to find a polygamous behavior for  
QD~\cite{Prabhu:11,Giorgi-1:11,Ren:11,Fanchini:11,Bai:12}. More generally, such a negative answer has been obtained for any sensible 
measure ${\cal Q}$ of quantum correlations~\cite{Streltsov:11} (even though a function of ${\cal Q}$ may be able to yield a 
state-dependent trade-off~\cite{Salini:12}). 

The violation of the monogamy inequality~(\ref{mon1}) means that the {\it single-site correlation} ${\cal Q} (A,BC)$ is unable in general to 
set a limit for the sum of pairwise correlations. However, it does {\it not} imply that subsystems can be freely correlated. 
More specifically, it is possible to restore monogamy if we can find out a physical quantity that is able to provide a bound 
for the sum of pairwise correlations. {\it Here, we will investigate this problem by exploring such a route, 
providing a monogamy bound that holds for either pure or mixed states in a multipartite system composed of a number $N$ of subsystems}. 
In order to achieve this aim, we will show that a limit for the sum of pairwise quantum correlations is provided by 
a multipartite extension of QD, named as {\it global quantum discord} (GQD), which has been proposed in Ref.~\cite{Rulli:11}. 
In the case of tripartite states, a monogamy bound will then be obtained by replacing ${\cal Q} (A,BC)$ for the GQD, 
denoted by $\mathcal{D}\left( A : B : C \right)$, yielding
\begin{equation}
\mathcal{D}\left( A : B : C \right) \ge \mathcal{D}\left( A : B \right) + \mathcal{D}\left( A : C \right).
\label{mon2}
\end{equation}
\begin{figure}[th]
\centering {\includegraphics[angle=0,scale=0.3]{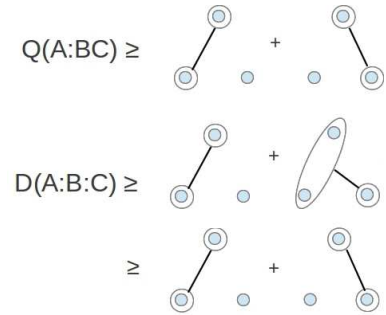}}
\caption{(Color online) Schematic description of tripartite monogamy structure for ${\cal Q} (A,BC)$ and 
$\mathcal{D}\left( A : B : C \right)$.}
\label{f1}
\end{figure}

A schematic description comparing the monogamy bounds for ${\cal Q} (A,BC)$ and $\mathcal{D}\left( A : B : C \right)$ 
is displayed in Fig.~\ref{f1}. The validity of inequality (\ref{mon2}) will be shown for all quantum states whose bipartite 
quantum discords do not increase under discard of subsystems. Moreover, we will show that GQD allows for the 
extension of Eq.~(\ref{mon2}) for the case of $N$ subsystems. We observe that the maximum of GQD typically increases 
with $N$, which accounts for the fact that pairwise correlations are less monogamous than entanglement. On the other hand, 
as will be illustrated in several examples, GQD is sufficiently limited to provide a relevant constraint for the sum of 
pairwise QD.  As an application, we will use Eq.~(\ref{mon2}) to propose a natural measure of residual multipartite quantum 
correlations.

\section{Quantum discord and its multipartite extension}

QD, which has been introduced in Ref.~\cite{Ollivier:01} and provides a measure of the quantumness of correlations, 
can be suitably defined as the minimal loss of total correlation after a non-selective measurement~\cite{Luo:10}. 
Indeed, consider a bipartite system $AB$ composed of subsystems $A$ and $B$. Denoting by $\hat{\rho}_{AB}$ the density 
operator of $AB$ and by $\hat{\rho}_A$ and $\hat{\rho}_B$ the density operators of parts $A$ and $B$, respectively, 
the total correlation between $A$ and $B$ is measured by the quantum mutual information~\cite{Groisman:05} 
\begin{equation}
I(\hat{\rho}_{AB}) = S(\hat{\rho}_A) - S(\hat{\rho}_A | \hat{\rho}_B),
\end{equation}
where $S(\hat{\rho}_A) = -{\textrm{Tr}} \hat{\rho}_A \log_2 \hat{\rho}_A$ is the von Neumann entropy for $A$ and 
\begin{equation}
S(\hat{\rho}_A | \hat{\rho}_B) = S(\hat{\rho}_{AB}) - S(\hat{\rho}_B)
\end{equation}
is the entropy of $A$ conditional on $B$. By operating on subsystem $B$, QD can then be expressed as
$\mathcal{D}\left( A | B \right) = 
\min_{\{\hat{\Pi}_{B}^{k}\}} \left[
I(\hat{\rho}_{AB}) - 
I(\Phi _{B}\left( \hat{\rho}_{AB}\right))\right]$,
where $\Phi _{B}\left( \hat{\rho}_{AB}\right)$ denotes a non-selective measurement $\{\hat{\Pi}_{B}^{j}\}$ on part $B$ of 
$\hat{\rho}_{AB}$, which yields
$\Phi _{B}\left( \hat{\rho}_{AB}\right) = \sum_{j}
\left( \hat{1}_{A}\otimes \hat{\Pi}_{B}^{j} \right) 
\hat{\rho}_{AB}
\left(\hat{1}_{A}\otimes \hat{\Pi}_{B}^{j}\right)$.
Note that this definition is asymmetric with respect to measurement on $A$ or $B$. In particular, 
a strictly classical state requires both $\mathcal{D}\left( A | B \right) = 0$ and $\mathcal{D}\left( B | A \right) = 0$. 
Indeed, this corresponds to a density operator $\hat{\rho}_{AB} = \sum_{i,j} p_{ij} |i\rangle\langle i| \otimes |j\rangle\langle j|$, 
where $p_{ij}$ is a joint probability distribution and the sets $\{|i\rangle\}$ and $\{|j\rangle\}$ constitute orthonormal bases for 
the systems $A$ and $B$, respectively. Such strictly classical states can also be identified by a single measure, which is the symmetrized 
version of QD
\begin{equation}
\mathcal{D}\left( A : B \right) = 
\min_{\{\hat{\Pi}_{A}^{j}\otimes\hat{\Pi}_{B}^{k}\}} \left[
I(\hat{\rho}_{AB}) - 
I(\Phi _{AB}\left( \hat{\rho}_{AB}\right))\right], 
\label{BipartiteDiscord}
\end{equation}
where the measurement operator $\Phi _{AB}$ is given by
\begin{equation}
\Phi _{AB}\left( \hat{\rho}_{AB}\right) =\sum_{j,k} \left(\hat{\Pi}_{A}^{j}\otimes 
\hat{\Pi}_{B}^{k} \right) \hat{\rho}_{AB} \left(\hat{\Pi}_{A}^{j}\otimes \hat{\Pi}_{B}^{k}\right) \, .
\end{equation}
Eq.~(\ref{BipartiteDiscord}) provides the symmetric QD considered in Ref.~\cite{Maziero:10} and 
experimentally witnessed in Refs.~\cite{Auccaise:11,Aguilar:12}. The vanishing of $\mathcal{D}\left( A : B \right)$ 
occurs if and only if the state is classical. In particular, the absence of $\mathcal{D}\left( A : B \right)$ 
is the key ingredient for local sharing of pre-established correlations (local broadcasting)~\cite{Piani:08}. 

Generalizations of quantum discord to multipartite states have been considered in different scenarios~\cite{Modi:11,Chakrabarty:10,Okrasa:11,Rulli:11}, 
which intend to account for quantum correlations that may exist beyond pairwise subsystems in a composite system. 
In this direction, one possible approach to account multipartite quantum correlations is to start from the symmetrized QD and then 
to systematically extend it to the multipartite scenario. This originates GQD as a measure of global quantum discord, 
as proposed in Ref.~\cite{Rulli:11}. The GQD $\mathcal{D}\left( A_1 : \cdots :A_N \right)$ for 
an arbitrary multipartite state $\hat{\rho}_{A_1 \cdots A_N}$ under a set of local 
measurements $\{\hat{\Pi}_{A_1}^{j_1} \otimes \cdots \otimes \hat{\Pi}_{A_N}^{j_N}\}$ can suitably be expressed as~\cite{Rulli:11,Celeri:11,Xu:12}
\begin{equation}
\mathcal{D}\left( A_1 : \cdots : A_N \right) = \min_{\Phi} \left[\,
I(\hat{\rho}_{A_1 \cdots A_N}) - I(\Phi\left( \hat{\rho}_{A_1 \cdots A_N} \right)) \right], 
\label{gqd-def}
\end{equation}
where 
\begin{equation}
\Phi\left( \hat{\rho}_{A_1 \cdots A_N} \right) = \sum_{k} {\hat{\Pi}}_{k} \, \hat{\rho}_{A_1 \cdots A_N} \, 
{\hat{\Pi}}_{k},
\end{equation}
with $\hat{\Pi}_{k} = \hat{\Pi}_{A_1}^{j_1} \otimes \cdots \otimes \hat{\Pi}_{A_N}^{j_N}$ and 
$k$ denoting the index string $(j_1 \cdots j_N$). In Eq.~(\ref{gqd-def}), the mutual information $I(\hat{\rho}_{A_1 \cdots A_N})$ 
is defined following the natural generalization proposed in Ref.~\cite{Groisman:05}, which implies that $I(\hat{\rho}_{A_1 \cdots A_N})$ 
and $I(\Phi\left( \hat{\rho}_{A_1 \cdots A_N} \right))$ are given by
\begin{eqnarray}
\hspace{-0.4cm}I(\hat{\rho}_{A_1 \cdots A_N}) &=& \sum_{k=1}^{N} S\left( \hat{\rho}_{A_k}\right) - 
S\left( \hat{\rho}_{A_1 \cdots A_N} \right) , \nonumber \\
\hspace{-0.4cm}I(\Phi\left( \hat{\rho}_{A_1 \cdots A_N} \right)) &=& \sum_{k=1}^{N} S\left( \Phi \left( \hat{\rho}_{A_k}\right) \right) 
- S\left(\Phi \left( \hat{\rho}_{A_1 \cdots A_N}\right) \right) , 
\label{Imulti}
\end{eqnarray}
where 
\begin{equation}
\Phi \left( \hat{\rho}_{A_k} \right) = \sum_{k^\prime} \hat{\Pi}_{A_k}^{k^\prime} \, \hat{\rho}_{A_k} \,
\hat{\Pi}_{A_k}^{k^\prime}.
\end{equation}
GQD is symmetric with respect to subsystem exchange and shown to be non-negative for arbitrary states~\cite{Rulli:11}. 
Moreover, it can be detected through a convenient (with no extremization procedure) witness operator~\cite{Saguia:11}. 
Concerning its applications, GQD has been shown to be useful in the characterization of quantum phase transitions~\cite{Rulli:11,Campbell:11}. 
In terms of operational interpretation, GQD may play a role in quantum communication, in the sense that its absence means that 
the quantum state simply describes a classical probability multidistribution 
$ \sum_{i_1,\cdots,i_N} p_{i_1 \cdots i_N} |i_1\rangle\langle i_1| \otimes \cdots \otimes |i_N\rangle\langle i_N|$ 
(with $p_{i_1 \cdots i_N} \ge 0$, $\sum p_{i_1 \cdots i_N} = 1$) 
and, therefore, allows for local broadcasting of correlations~\cite{Piani:08}.

\section{Monogamy of quantum correlations and global quantum discord}

Let us begin by defining the loss of correlation $\mathcal{D}_{\Phi} (A_1 : \cdots : A_N)$ in a 
quantum state $\hat{\rho}_{A_1 \cdots A_N}$ generated by a measurement 
$\Phi\left( \hat{\rho}_{A_1 \cdots A_N} \right)$, which is given by 
\begin{equation}
\mathcal{D}_{\Phi}(A_1:\cdots:A_N)  =I\left( \hat{\rho}_{A_1 \cdots A_N} \right) -I\left(
\Phi \left( \hat{\rho}_{A_1 \cdots A_N}\right) \right).
\label{Dphi}
\end{equation}
We can then show that $\mathcal{D}_{\Phi} (A_1 : \cdots : A_N)$ can be decomposed in 
terms of loss of correlation for suitable bipartitions, as provided by the following theorem. 

\begin{theorem}
{\it Given a non-selective measurement $\Phi\left( \hat{\rho}_{A_1 \cdots A_N} \right)$, the 
loss of correlation $\mathcal{D}_{\Phi} (A_1 : \cdots : A_N)$ can be decomposed as} 
\begin{equation}
\mathcal{D}_{\Phi} (A_1 : \cdots : A_N) = \sum_{k=1}^{N-1} \mathcal{D}_{\Phi} (A_1 \cdots A_k : A_{k+1}) .
\label{DN}
\end{equation}
\label{t1}
\end{theorem}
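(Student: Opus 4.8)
The plan is to exploit the telescoping structure of the multipartite mutual information under coarse-graining of the parties. The key observation is that $I(\hat{\rho}_{A_1\cdots A_N})$ decomposes naturally when we group the first $k$ subsystems together: define the ``chain'' quantity $I(A_1\cdots A_k : A_{k+1}) = S(\hat{\rho}_{A_1\cdots A_k}) + S(\hat{\rho}_{A_{k+1}}) - S(\hat{\rho}_{A_1\cdots A_{k+1}})$. Summing this over $k=1,\dots,N-1$, the $S(\hat{\rho}_{A_1\cdots A_{k+1}})$ term at step $k$ cancels against the $S(\hat{\rho}_{A_1\cdots A_k})$ term at step $k+1$, leaving exactly $\sum_{k=1}^N S(\hat{\rho}_{A_k}) - S(\hat{\rho}_{A_1\cdots A_N}) = I(\hat{\rho}_{A_1\cdots A_N})$, using the definition in Eq.~(\ref{Imulti}). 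So the first step is to record this identity: $I(\hat{\rho}_{A_1\cdots A_N}) = \sum_{k=1}^{N-1} I(A_1\cdots A_k : A_{k+1})$.

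Second, I would apply the same telescoping identity to the post-measurement state $\Phi(\hat{\rho}_{A_1\cdots A_N})$. The crucial point here is that the measurement $\Phi$ is a tensor product of local measurements, so the marginal of $\Phi(\hat{\rho}_{A_1\cdots A_N})$ on the block $A_1\cdots A_k$ is exactly $\Phi(\hat{\rho}_{A_1\cdots A_k})$ — that is, tracing out $A_{k+1},\dots,A_N$ commutes with applying the local measurements on the remaining parties (the traced-out projectors sum to the identity). Hence $I(\Phi(\hat{\rho}_{A_1\cdots A_N})) = \sum_{k=1}^{N-1} I\big(\Phi(A_1\cdots A_k) : \Phi(A_{k+1})\big)$, where on the right I interpret $\Phi$ restricted to the relevant parties and I use that $S(\Phi(\hat{\rho}_{A_1\cdots A_k}))$ is the block entropy of the measured state.

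Third, subtract the two telescoped identities term by term. Since $\mathcal{D}_\Phi(A_1:\cdots:A_N) = I(\hat{\rho}_{A_1\cdots A_N}) - I(\Phi(\hat{\rho}_{A_1\cdots A_N}))$ by Eq.~(\ref{Dphi}), and each summand on the right is $I(A_1\cdots A_k:A_{k+1}) - I(\Phi(A_1\cdots A_k):\Phi(A_{k+1}))$, I need only recognize this difference as $\mathcal{D}_\Phi(A_1\cdots A_k : A_{k+1})$ — the loss of correlation for the bipartition $(A_1\cdots A_k \,|\, A_{k+1})$ under the induced product measurement. That is precisely the bipartite analogue of Eq.~(\ref{Dphi}) with $A \to A_1\cdots A_k$ and $B \to A_{k+1}$, so Eq.~(\ref{DN}) follows.

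The main obstacle — really the only non-formal point — is the marginal-consistency step: verifying that $\mathrm{Tr}_{A_{k+1}\cdots A_N}\,\Phi(\hat{\rho}_{A_1\cdots A_N}) = \Phi_{A_1\cdots A_k}(\hat{\rho}_{A_1\cdots A_k})$, so that the entropies appearing in the two telescopings genuinely match up and the bipartite $\mathcal{D}_\Phi$ is being evaluated on the correct reduced state. This is immediate from $\sum_{j}\hat{\Pi}_{A_\ell}^{j}(\cdot)\hat{\Pi}_{A_\ell}^{j}$ being trace-preserving on each factor, but it is worth stating explicitly since the whole decomposition rests on it. Everything else is bookkeeping with the telescoping sum.
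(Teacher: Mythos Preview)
Your proposal is correct and follows essentially the same approach as the paper: both arguments work by adding and subtracting the intermediate block entropies $S(\hat{\rho}_{A_1\cdots A_k})$ and $S(\Phi(\hat{\rho}_{A_1\cdots A_k}))$ to expose the bipartite pieces. The paper phrases this as a one-step recursion $\mathcal{D}_{\Phi}(A_1:\cdots:A_N)=\mathcal{D}_{\Phi}(A_1:\cdots:A_{N-1})+\mathcal{D}_{\Phi}(A_1\cdots A_{N-1}:A_N)$ applied iteratively, whereas you carry out the full telescoping of $I$ and $I\circ\Phi$ at once and then subtract; the underlying computation is identical, and your explicit remark on marginal consistency of the local measurement is a point the paper leaves implicit.
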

\vspace{-0.4cm}
\begin{proof}
We rewrite $\mathcal{D}_{\Phi} (A_1 : \cdots : A_N)$ by using Eqs.~(\ref{Imulti}) and (\ref{Dphi}), which yields
\begin{eqnarray}
&&\mathcal{D}_{\Phi} (A_1 : \cdots : A_N) = \sum_{k=1}^{N}  \left[ S\left( \hat{\rho}_{A_k}\right) 
- S\left( \Phi_{k}\left( \hat{\rho}_{A_k}\right) \right) \right] \nonumber \\
&&-S\left( \hat{\rho}_{A_1 \cdots A_N} \right) + S\left(\Phi \left( \hat{\rho}_{A_1 \cdots A_N}\right) \right) . 
\label{Dentropy}
\end{eqnarray}
We then add and subtract $S\left( \hat{\rho}_{A_1 \cdots A_{N-1}}\right)$ and 
$S\left(\Phi \left( \hat{\rho}_{A_1 \cdots A_{N-1}}\right)\right)$ in Eq.~(\ref{Dentropy}). After rearrangement of the 
terms, we obtain
\begin{eqnarray}
\mathcal{D}_{\Phi} (A_1 : \cdots : A_N) &=& \mathcal{D}_{\Phi} (A_1 : \cdots : A_{N-1}) + \nonumber \\ 
&& \mathcal{D}_{\Phi} (A_1 \cdots A_{N-1} : A_N) . 
\label{D2}
\end{eqnarray}
By recursively applying Eq.~(\ref{D2}), we can reduce the term $\mathcal{D}_{\Phi} (A_1 : \cdots : A_{N-1})$ 
to a sum of bipartite contributions $\sum_{k=1}^{N-2} \mathcal{D}_{\Phi} (A_1 \cdots A_k : A_{k+1})$ which, 
when added to $\mathcal{D}_{\Phi} (A_1 \cdots A_{N-1} : A_N)$ in Eq.~(\ref{D2}), leads to Eq.~(\ref{DN}). 
\end{proof}

\vspace{0.2cm}

By taking a tripartite system $ABC$, Theorem~\ref{t1} implies that
\begin{equation}
\mathcal{D}_{\Phi} (A : B : C) = \mathcal{D}_{\Phi} (A : B) + \mathcal{D}_{\Phi} (A B : C).
\label{eqdphi}
\end{equation}
Note that Eq.~(\ref{eqdphi}) provides a decomposition of $\mathcal{D}_{\Phi} (A : B : C)$ in terms of bipartite 
contributions. 
As a further step to achieve a monogamy trade-off, we will now introduce GQD by minimizing $\mathcal{D}_{\Phi} (A_1 : \cdots : A_N)$ 
over all measurements $\Phi\left( \hat{\rho}_{A_1 \cdots A_N} \right)$, i.e.,
\begin{equation}
\mathcal{D}\left( A_1 : \cdots : A_N \right) = \min_{\Phi} \left[ \mathcal{D}_{\Phi} (A_1 : \cdots : A_N) \right] .
\end{equation}
Then, a monogamy bound for quantum correlations in $N$-partite systems can be obtained, which is proved below.
\begin{theorem}
{\it For an arbitrary density operator $\hat{\rho}_{A_1 \cdots A_N}$ with an amount of GQD given by  
$\mathcal{D}\left( A_1 : \cdots :A_N \right)$, pairwise QD obeys the monogamy bound }
\begin{equation}
\mathcal{D}\left( A_1 : \cdots :A_N \right) \ge \sum_{k=1}^{N-1} \mathcal{D}\left( A_1 : A_{k+1} \right), 
\label{monogamybound}
\end{equation}
{\it provided that the bipartite QDs $\mathcal{D} (A_1 \cdots A_{k} : A_{k+1})$, with $2\le k <N$), do not increase under 
discard of subsystems, i.e., $\mathcal{D} (A_1 \cdots A_{k} : A_{k+1}) \ge \mathcal{D} (A_1 : A_{k+1})$.}
\label{t3}
\end{theorem}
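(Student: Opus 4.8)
The plan is to combine the exact decomposition from Theorem~\ref{t1} with the hypothesis that bipartite discord does not increase under discard of subsystems. First I would take the optimal measurement $\Phi^{*}$ that achieves the minimum in $\mathcal{D}\left( A_1 : \cdots : A_N \right)$, so that
\begin{equation}
\mathcal{D}\left( A_1 : \cdots : A_N \right) = \mathcal{D}_{\Phi^{*}} (A_1 : \cdots : A_N).
\end{equation}
Applying Theorem~\ref{t1} to this particular measurement gives
\begin{equation}
\mathcal{D}\left( A_1 : \cdots : A_N \right) = \sum_{k=1}^{N-1} \mathcal{D}_{\Phi^{*}} (A_1 \cdots A_k : A_{k+1}).
\end{equation}

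Next I would bound each term of the sum from below. For each $k$, the restriction of $\Phi^{*}$ to the bipartition $A_1\cdots A_k : A_{k+1}$ is one particular local bipartite measurement, so by definition of the bipartite symmetric discord as a minimum over such measurements,
\begin{equation}
\mathcal{D}_{\Phi^{*}} (A_1 \cdots A_k : A_{k+1}) \ge \mathcal{D} (A_1 \cdots A_k : A_{k+1}).
\end{equation}
Summing over $k$ then yields
\begin{equation}
\mathcal{D}\left( A_1 : \cdots : A_N \right) \ge \sum_{k=1}^{N-1} \mathcal{D} (A_1 \cdots A_k : A_{k+1}).
\end{equation}
Finally, I would invoke the hypothesis $\mathcal{D} (A_1 \cdots A_k : A_{k+1}) \ge \mathcal{D} (A_1 : A_{k+1})$ for $2 \le k < N$ (the $k=1$ term is already $\mathcal{D}(A_1:A_2)$), which replaces every bipartite block discord with the corresponding pairwise discord and gives the claimed bound~(\ref{monogamybound}).

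The main obstacle is the middle inequality: one must be careful that the optimal global measurement $\Phi^{*} = \hat{\Pi}_{A_1}^{j_1}\otimes\cdots\otimes\hat{\Pi}_{A_N}^{j_N}$, when viewed on the coarse bipartition $(A_1\cdots A_k)(A_{k+1})$, still has the product form $\hat{\Pi}^{j_1\cdots j_k}_{A_1\cdots A_k}\otimes\hat{\Pi}^{j_{k+1}}_{A_{k+1}}$ required to be an admissible measurement in the definition~(\ref{BipartiteDiscord}) of $\mathcal{D}(A_1\cdots A_k : A_{k+1})$ — this holds because a tensor product of local projectors groups into a product across any bipartition, and the corresponding post-measurement state $\Phi^{*}(\hat\rho)$ coincides with the one appearing in $\mathcal{D}_{\Phi^{*}}(A_1\cdots A_k : A_{k+1})$, so the inequality is indeed just "a specific measurement is at least the minimum." One should also note explicitly that $\mathcal{D}_{\Phi^{*}}(A_1\cdots A_k : A_{k+1})$ as it appears in the decomposition of Theorem~\ref{t1} is exactly the bipartite loss of correlation $I(\hat\rho_{A_1\cdots A_{k+1}}) - I(\Phi^{*}(\hat\rho_{A_1\cdots A_{k+1}}))$ for that bipartition (since the entropy terms for the discarded subsystems $A_{k+2},\dots,A_N$ cancel in forming $\mathcal{D}_{\Phi^{*}}(A_1\cdots A_k : A_{k+1})$), so comparing it with the minimized bipartite discord is legitimate. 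The discard-monogamy hypothesis in the last step is an assumption rather than something to be proved, so no difficulty arises there; the whole argument is short once the bookkeeping of which marginal each term refers to is set up correctly.
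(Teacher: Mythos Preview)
Your proof is correct and follows essentially the same route as the paper: apply the decomposition of Theorem~\ref{t1}, use that a specific measurement $\Phi^{*}$ gives a value at least as large as the minimized bipartite discord (equivalently, the paper's ``min of a sum $\ge$ sum of mins''), and then invoke the discard hypothesis. Your explicit verification that the restriction of $\Phi^{*}$ to the block $A_1\cdots A_k$ is an admissible product measurement for the bipartite discord $\mathcal{D}(A_1\cdots A_k : A_{k+1})$ is a bookkeeping detail the paper leaves implicit.
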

\begin{proof}
Starting from Theorem~1, we minimize both sides of Eq.~(\ref{DN}) with respect to $\Phi\left( \hat{\rho}_{A_1 \cdots A_N} \right)$, yielding
\begin{equation}
\min_{\Phi} \mathcal{D}_{\Phi} (A_1 : \cdots : A_N) = \min_{\Phi} \sum_{k=1}^{N-1} \mathcal{D}_{\Phi} (A_1 \cdots A_k : A_{k+1}) .
\label{Discord-in1}
\end{equation}
However, we have that
\begin{equation}
\min_{\Phi} \sum_{k=1}^{N-1} \mathcal{D}_{\Phi} (A_1 : A_{k+1}) \ge \sum_{k=1}^{N-1} \min_{\Phi} \mathcal{D}_{\Phi} (A_1 \cdots A_k : A_{k+1}) .
\label{minim}
\end{equation}
Then, by inserting Eq.~(\ref{minim}) into Eq.~(\ref{Discord-in1}), we obtain 
\begin{equation}
\mathcal{D}\left( A_1 : \cdots : A_N \right) \ge  \sum_{k=1}^{N-1} \mathcal{D}\left( A_1 \cdots A_{k} : A_{k+1} \right) .
\end{equation}
Hence, by using the condition that $\mathcal{D} (A_1 \cdots A_{k} : A_{k+1}) \ge \mathcal{D} (A_1 : A_{k+1})$ 
we obtain the monogamy bound stated by inequality~(\ref{monogamybound}).
\end{proof}

\vspace{0.2cm}

In particular, for tripartite states, the monogamy inequality~(\ref{monogamybound}) reduces to inequality~(\ref{mon2}), 
provided that the condition ${\cal D}(A B : C) \ge {\cal D} (A : C)$ is satisfied. 
We observe that this non-increasing behavior of bipartite QD under discard of subsystems is a sufficient condition for 
the validity of monogamy, but it is {\it not} a necessary requirement. This means that monogamy may hold even in more 
general scenarios. For example, consider a tripartite system composed of qubits ABC in the mixed state 
$\rho_{ABC} = (1/2) ( |000 \rangle \langle 000| + |1+1 \rangle \langle 1+1 |)$, 
where $\{|0\rangle,|1\rangle\}$ denotes the computational basis and $|+\rangle = (|0\rangle + |1\rangle)/\sqrt{2}$. 
For this state, we have that QD vanishes for the bipartition A with BC, i.e. $\mathcal{D}(A:BC)=0$. However, by discarding 
subsystem C, we obtain $\rho_{AB} = (1/2) ( |00 \rangle \langle 00| + |1+ \rangle \langle 1+ |)$, which mixes nonorthogonal 
states for subsystem B. We then obtain a nonvanishing QD between A and B, which is given by $\mathcal{D}(A:B) \approx 0.204$. 
However, the monogamy trade-off 
$\mathcal{D}\left( A : B : C \right) \ge \mathcal{D}\left( A : B \right) + \mathcal{D}\left( A : C \right)$ keeps still valid, 
since GQD turns out to be equal to $\mathcal{D}(A:B)$, namely, $\mathcal{D}\left( A : B : C \right) \approx 0.204$ and 
$\mathcal{D}\left( A : C \right) = 0$. Hence, we obtain a monogamous behavior as given by inequality~(\ref{mon2}), with saturation 
achieved for this state. 

\section{Illustrations}

\subsection{Pure states with Schmidt decomposition}

Let us illustrate the monogamy bound~(\ref{monogamybound}) 
in the case of multipartite pure states $|\psi\rangle$ that admit Schmidt decomposition, whose explicit conditions of 
existence are discussed in Ref.~\cite{SchDec}. We will assume that the system is composed by a set of qubits. 
In such a case, we can write 
$|\psi\rangle = \sum_{i=1}^{2} \sqrt{p_i} |i_{A_1}\rangle \otimes \cdots \otimes |i_{A_N}\rangle$, where 
$\{|i_{A_k}\rangle\}$ are orthonormal bases, $p_i \ge 0$, and $\sum_i p_i = 1$.
For the density operator $\hat{\rho}_{A_1 \cdots A_N}=|\psi\rangle \langle \psi |$, we obtain
\begin{equation}
\hat{\rho}_{A_1 \cdots A_N} = \sum_{i,j=1}^2 \sqrt{p_i p_j} |i_{A_1} \cdots i_{A_N}\rangle \langle j_{A_1} \cdots j_{A_N}| .
\end{equation}
Since Schmidt decomposition implies equal spectrum for all single-qubit reduced density operators $\hat{\rho}_{A_k}$, we obtain 
that $S(\hat{\rho}_{A_k}) = -\sum_{i=1}^2 p_k \log_2 p_k \equiv S $, for any individual subsystem $A_k$. Therefore, 
the mutual information is $I(\hat{\rho}_{A_1 \cdots A_N}) = N \, S$. In order to consider measurements 
$\Phi\left( \hat{\rho}_{A_1 \cdots A_N} \right)$ over $\hat{\rho}_{A_1 \cdots A_N}$, it can be shown that, 
by adopting projective (von Neumann) measurements, the minimization of the loss of correlation is obtained in Schmidt 
basis, namely, $\{\hat{\Pi}_{A_k}^{i}\} = \{|i_{A_k}\rangle \langle i_{A_k}|\}$. This is a consequence of both the 
group homomorphism of $U(2)$ to $SO(3)$ and the monotonicity of entropy under majorization (see discussion for 
the state $(|0 \cdots 0 \rangle + |1 \cdots 1 \rangle)/\sqrt{2}$ in Ref.~\cite{Xu:12}). 
Then, $\Phi\left( \hat{\rho}_{A_k} \right) = \hat{\rho}_{A_k}$, which implies 
$S(\Phi\left( \hat{\rho}_{A_k} \right)) = S$. Moreover
$\Phi(\hat{\rho}_{A_1 \cdots A_N}) = \sum_{i=1}^2 p_i |i_{A_1} \cdots i_{A_N}\rangle \langle i_{A_1} \cdots i_{A_N}|$.
Therefore, the mutual information after measurement is $I(\Phi(\hat{\rho}_{A_1 \cdots A_N})) = (N-1) \, S$, which  
implies that 
$\mathcal{D}\left( A_1 : \cdots :A_N \right) = S$.
Therefore, the sum of pairwise QDs is upper limited by the von Neumann entropy $S$ of an individual subsystem, which holds for quantum systems 
composed of an arbitrary number $N$ of qubits. As an example, consider the GHZ state for $N$ qubits, namely, 
$|GHZ_N\rangle = \left(|0 \cdots 0 \rangle - |1 \cdots 1 \rangle \right)/\sqrt{2}$.
For this state, we obtain $\mathcal{D}\left( A_1 : \cdots :A_N \right) = 1$ (independently of $N$) and vanishing pairwise 
correlations $\mathcal{D}\left( A_1 : A_k \right)$, which is in agreement with Theorem~\ref{t3}. 

\begin{figure}[th]
\centering {\includegraphics[angle=0,scale=0.3]{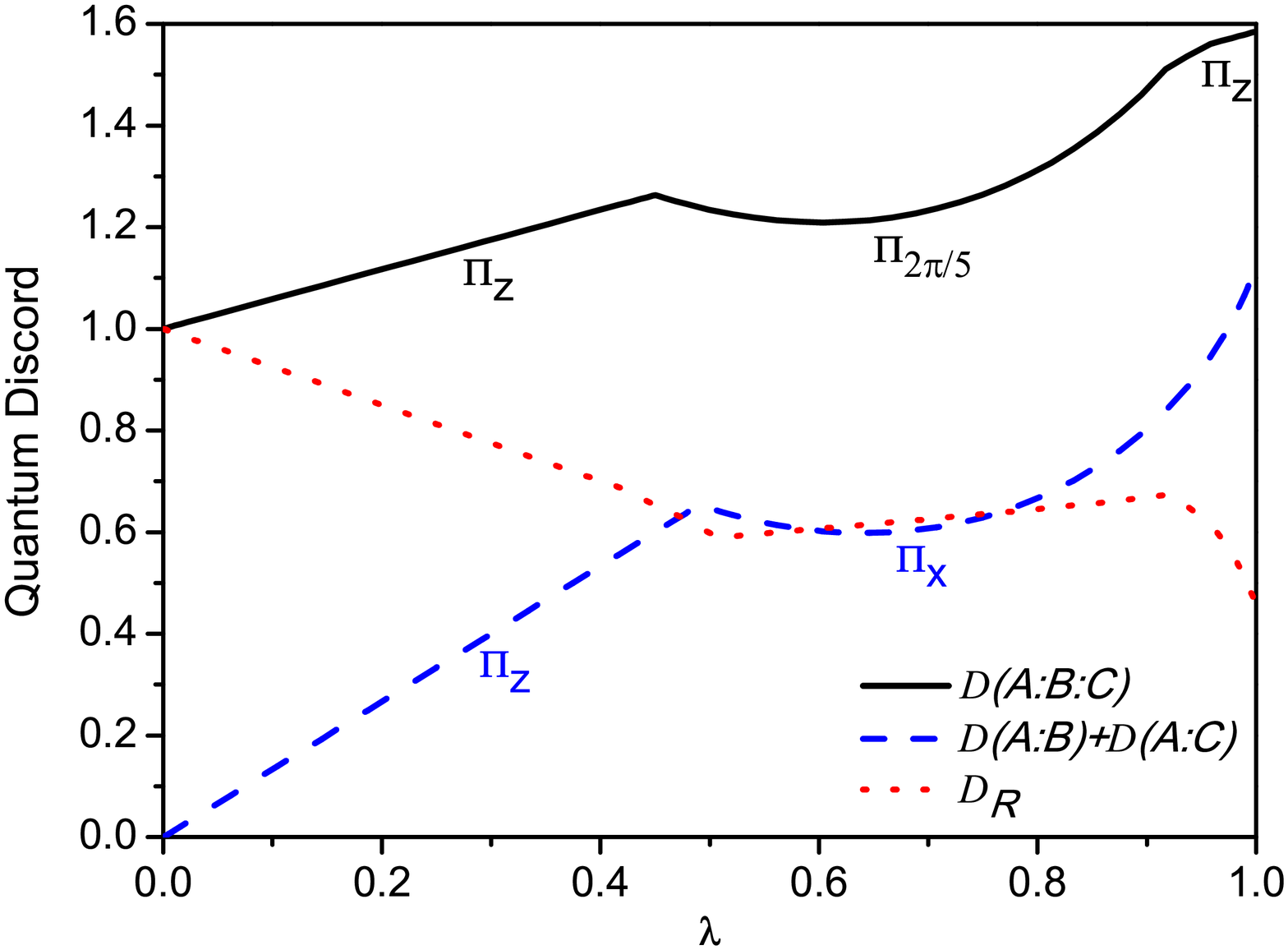}}
\caption{(Color online) GQD, pairwise QD, and residual QD for the W-GHZ mixed state as a function of the parameter $\lambda$. For each 
curve, the minimizing bases $\{\Pi_{A_k}\}$ are indicated, which are equal for all qubits measured. As $\lambda$ varies, the minimizing 
bases are denoted in terms of the Bloch sphere angles ($\theta, \phi$), with $\Pi_z$, $\Pi_x$, and $\Pi_{2\pi/5}$ associated with the angles 
$(0,0)$, $(\pi/2,0)$, and $(2\pi/5,0)$, respectively. }
\label{f2}
\end{figure}

\subsection{Tripartite W-GHZ mixed state}

Let us now consider a tripartite system $ABC$ described by the W-GHZ state
\begin{equation}
\hat{\rho}=\lambda \left\vert W\right\rangle \left\langle W\right\vert
+\left( 1-\lambda \right) \left\vert GHZ\right\rangle \left\langle
GHZ\right\vert ,
\end{equation}
where
$\left\vert W  \right\rangle =\left( \left\vert 001 \right\rangle + 
\left\vert 010 \right\rangle +\left\vert 100 \right\rangle \right)/\sqrt{3}$ and
$\left\vert GHZ \right\rangle =\left( 
|000\rangle - |111\rangle \right)/\sqrt{2}$, with $0 \le \lambda \le 1$.
Note that, for $\lambda =0$ and $\lambda =1$, we have pure states, given by $|GHZ\rangle$ and 
$|W\rangle$ states, respectively. Therefore, by adopting projective measurements, we will have 
that $\mathcal{D}\left( A: B : C \right) = 1$
for $\lambda=0$. As $\lambda$ increases, we numerically find out a monotonic increase of GQD until 
$\mathcal{D}\left( A : B: C \right) = \log_2 3 $ for $\lambda=1$. This can be seen as a consequence of 
the absence of Schmidt decomposition for the $W$ state, which leaves GQD unconstrained by the entropy of an individual subsystem. 
For the complete range of $\lambda$, we plot GQD in Fig.~\ref{f2} as well as the pairwise sum $\mathcal{D}\left( A : B \right)$ + 
$\mathcal{D}\left( A : C \right)$. We observe that monogamy as given by inequality~(\ref{mon2}) 
is obeyed for any $\lambda$, with distinct minimizing bases $\{\Pi_{A_k}\}$ for GQD and pairwise QD. Note also from the plot 
that the bound is considerably tight in the sense that, for any $\lambda$, the value of GQD is sufficiently limited to ensure that the 
subsystems are {\it not} freely correlated. 

\section{Residual multipartite correlations}

The monogamy bound~(\ref{mon2}) allows for the definition of a measure 
$\mathcal{D}_{R}$ for {\it residual} multipartite quantum correlations, 
namely, contributions to quantum correlations beyond pairwise QD. 
This is similar to the definition of {\it tangle} as a measure of residual multipartite entanglement~\cite{Coffman:00}. 
Indeed, let us consider the particular case of tripartite permutation invariant states, such the W-GHZ state. 
In such a case, we can define the residual QD as   
\begin{equation}
\mathcal{D}_{R} \equiv 
\mathcal{D}\left( A : B : C \right) - \mathcal{D}\left( A : B \right) - \mathcal{D}\left( A : C \right).
\end{equation}
Note that, by considering the non-increasing behavior of QD under discard of subsystems (such as in the W-GHZ state), 
monogamy is ensured, implying that $\mathcal{D}_{R} \ge 0$. If permutation invariance is absent, 
the tripartite residual correlations can be defined by $\min \mathcal{D}_{R}$, where minimization is taken over all subsystem permutations 
(see Ref.~\cite{Adesso:07} for a similar definition in the case of entanglement). 
Applying the residual measure for the GHZ state, it follows that $\mathcal{D}_{R}^{GHZ} = 1$, since pairwise contributions vanish. For the 
W state, residual QD is lower than in the case of the GHZ state, since pairwise QD is nonvanishing. Indeed, it can be 
be shown that $\mathcal{D}^{W}\left( A : B \right) = \mathcal{D}^{W}\left( A : C \right) = 2\log _{2}\left( 3\right) -\frac{2}{3%
}-\frac{5}{6}\log _{2}\left( 5\right) \approx 0.568$ (with the minimizing measurement 
found in the $\sigma_x$ eigenbasis). This implies that $\mathcal{D}_{R}^{W} =  \frac{4}{3}+\frac{5}{3}\log _{2}\left( 5\right) -3\log _{2}\left( 3\right)  \approx 0.448$. 
This behavior is exhibited in Fig.~\ref{f2} and is in agreement with the hierarchy found in Ref.~\cite{Giorgi-2:11}. 

\section{Conclusions}

In conclusion, we have introduced a monogamy bound for pairwise quantum correlations based on the amount of GQD available to 
a multipartite system. Remarkably, this monogamy inequality holds for general states whose bipartite quantum discord is non-increasing 
under discard of subsystems.  In particular, oppositely to the typical monogamy bound~(\ref{mon1}), it covers both GHZ and W tripartite 
classes of states, providing therefore a promising setup for the investigation of measures for multipartite classical and quantum residual 
correlations. We leave such topics for further research.  

\begin{acknowledgments}

We thank Kavan Modi and J{\'o}zsef Pitrik for helpful discussions. 
This work is supported by CNPq, CAPES, FAPERJ, and the Brazilian National 
Institute for Science and Technology of Quantum Information (INCT-IQ).

\end{acknowledgments}


\end{document}